\newcommand{\al}{\alpha}
\newcommand{\bet}{\beta}
\newcommand{\de}{\delta }
\newcommand{\bo}{\mathbf}
\newcommand{\demi}{\frac{1}{2}}
\newcommand{\abs}[1]{\lvert#1\rvert}
\newcommand{\norm}[1]{\lVert#1\rVert}
\newtheorem{thm}{Theorem}[subsection]
\newtheorem{lem}{Lemma}[subsection]
\begin{document}
\title{Geometric uniqueness for non-vacuum Einstein equations and applications :}
\author{David Parlongue \\
  \texttt{david.parlongue@polytechnique.org}}

\maketitle

\abstract{We prove in this note that local geometric uniqueness holds true without loss of regularity for Einstein equations coupled with a large class of matter models. We thus extend the Planchon-Rodnianski uniqueness theorem for vacuum spacetimes. In a second part of this note, we investigate the question of local regularity of spacetimes under geometric bounds. }

\section{Introduction :}
Consider  Einstein field equations on an oriented Lorentzian manifold ${\bf (M,g)}$ 
\begin{eqnarray}
{\bo R}_{\al \bet }-\demi {\bo R} {\bo g}_{\al \bet }&=& {\bo T}_{\al \bet}(\Psi) \\
F(\Psi, {\bf g})&=&0
\end{eqnarray}
where $\Psi$ denotes the matter field which is supposed to be a tensor of order (p,q) or a collection of tensors, $F$ denotes the matter equations, ${\bf T}$ the energy-momentum tensor. 

\vspace{3 mm}
We suppose that ${\bf (M,g)}$ is globally hyperbolic with respect to a riemannian hypersurface $\Sigma_0$ and denote by ${\bf D}$ the covariant derivative associated to ${\bf g}$. ${\bf M}$ is globally diffeomophic to $\Sigma_0 \times \mathbb{R}$. Denote by $\nabla$ the covariant derivative relative to $g_{\Sigma_0}$ and $\partial :=(\partial_t,\nabla)$. 

Consider a tensorfield  $P$ on ${\bf M}$. For $k \in \mathbb{N}$, we denote by :
\begin{equation*}
\norm{P}^2_{X^k_T} = \sup_{t \in ]0,T[} \sum_{0\leq |j|\leq k}\int_{\Sigma_t} \abs{\partial^j P}^2 d\mu_{g}
\end{equation*}

    The $X^k_T$ norms can be extended to fractional indices $s \in \mathbb{R}$. We will suppose in this note that $\Sigma_0$ is compact but similar results are true if $\Sigma_0$ admits a locally finite $C^1$-covering.

\vspace{3 mm}

An abstract initial data set for the Cauchy problem for Einstein equations consists in $(\Sigma_0, g_0,k_0,\rho,j)$ where $(\Sigma_0,g_0)$ is a three dimensional Riemannian manifold, $k_0$ a two times covariant tensorfield on $\Sigma_0$, $\rho$ a scalar, $j$ a vectorfield on $\Sigma_0$ satisfying the so-called constraint equations :
\begin{eqnarray}
R- \abs{k}^2 + tr(k)^2 &=& 2 \rho \\
\nabla^i k_{il} - \nabla_l tr(k)&=& j
\end{eqnarray}

Once a matter model satisfying some reasonable assumptions (see below) has been given, the first question is local-in-time well-posedness of the Cauchy problem for the Einstein-matter equations with initial data of a given regularity (typically $H^s(\Sigma_0)$) in a given gauge. However a standard pathology in the mathematical relativity literature is that local uniqueness up to a diffeomorphism which we will call geometric uniqueness requires more regularity on the initial data than the existence theorem in a given gauge. This phenomenon is a consequence of the following remark. 

If we consider a solution of Einstein equations in $X^s_T$ in a given gauge, we can construct in the neighborhood of $\Sigma_0$ a system of wave coordinates $(x^0,x^1,x^2,x^3)$, id est :
\begin{equation}
\square_{\bf g}x^{\al}=0
\end{equation}
Standard techniques (see lemma 3.0.1) imply that $x_{\al} \in X^s$ and thus ${\bf g}$ in the new coordinates lie only in $X^{s-1}$. It is well known that in Riemannian geometry, harmonic coordinates achieve optimal regularity for the metric (DeTurck-Kazdan theorem \cite{DeTurckKazdan}). It is not clear however by energy techniques that a lorentzian metric has optimal regularity in harmonic coordinates.

This rough analysis does not take however advantage of Einstein equations. F.Planchon and I.Rodnianski have remarked in \cite{PR} that in fact, the wave coordinates lie in $X^{s+1}$ and thus geometric uniqueness for the vacuum equations holds true without additional regularity. Their proof is based on the fact that the Riemann curvature tensor of a vacuum Einstein manifold is divergence free. However this property is not shared by other matter models, in general :
\begin{equation}
{\bo C}_{\bet \gamma \delta}= {\bo D}^{\al} {\bo R}_{\al \bet \gamma \delta} = {\bo D}_{\gamma }{\bo {Ric}}_{\bet \delta}- {\bo D}_{\delta }{\bo {Ric}}_{\gamma \bet}
\end{equation}

It is thus not clear whether the property of geometric uniqueness can be extended to a more general setting. 
Note also that a consequence of this loss of regularity is that the  existence of maximal globally hyperbolic development requires also one additional degree of Sobolev regularity in comparison to local existence in wave gauge (also called de Donder gauge).

We will thus study in this paper the well-posedness for the Cauchy problem for the geometric linear wave equation {\bf (L) }:
\begin{eqnarray*}
        {\bf D}_\al {\bf D}^\al \phi = \square_{\bf g} \phi &= &{\bf g}^{\al \bet}\partial_\al \partial_\bet \phi - {\bf \Gamma}^\al \partial_\al \phi   =0 \\
       \phi(0,.)&=&\phi_0 \in  H^r(\Sigma_0)\\
        \partial_t \phi(0,.)&=& \phi_1  \in H^{r-1}(\Sigma_0)
\end{eqnarray*}
where ${\bf \Gamma}^\al= {\bf g}^{\de \gamma}{\bf \Gamma}^\al_{\de \gamma }$.
We prove :
\begin{thm}
Given a generic Lorentzian metric ${\bf g} \in X^s_T$, the Cauchy problem {\bf (L)} is well-posed for $1\leq r \leq s$ whereas under physically reasonable assumptions (see below), the metric of a spacetime of general relativity is such that  {\bf (L)} is well-posed for $1 \leq r \leq s+1$.
\end{thm}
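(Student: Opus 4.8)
The plan is to deduce both statements from a single family of a priori energy estimates for \textbf{(L)}, and then to obtain well-posedness (existence, uniqueness, continuous dependence) from these estimates by the usual routine: uniqueness and the basic $r=1$ bound follow from multiplying $\square_{\bf g}\phi=0$ by $\partial_t\phi$ and integrating over $\Sigma_t$, while existence in the top space is recovered by a Galerkin/approximation scheme combined with a duality argument (solving the adjoint problem at one lower level). Thus the entire content is the order-$r$ inequality $\frac{d}{dt}\norm{\phi}^2_{X^r_t}\ls C(\norm{{\bf g}}_{X^s_T})\,\norm{\phi}^2_{X^r_t}$, and the two claimed ranges of $r$ correspond exactly to how far this inequality can be pushed. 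Throughout I use that $s$ is large enough that $X^s_T$ embeds into $C^1$ (i.e. $s>5/2$ in dimension three), so that products and commutators of functions in $X^s_T$ obey the standard Moser and bilinear $L^2$ estimates.

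For the generic statement ($1\ls r\ls s$) I would commute $\partial^{r-1}$ through the operator and estimate the resulting commutator in $L^2(\Sigma_t)$. The principal commutator $[\partial^{r-1},{\bf g}^{\al\bet}\partial_\al\partial_\bet]\phi$ has worst term $\partial^{r-1}{\bf g}\cdot\partial^2\phi$, with $\partial^{r-1}{\bf g}\in X^{s-r+1}$ and $\partial^2\phi\in H^{r-2}$; the bilinear estimate $\norm{uv}_{L^2}\ls\norm{u}_{H^a}\norm{v}_{H^b}$ (valid for $a+b\rs 3/2$, $a,b\rs 0$) shows this term is in fact controlled all the way up to $r=s+1$. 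The genuine bottleneck is the first-order term ${\bf \Gamma}^\al\partial_\al\phi$: since ${\bf \Gamma}^\al\in X^{s-1}$, its commutator produces $\partial^{r-1}{\bf \Gamma}\cdot\partial\phi$ with $\partial^{r-1}{\bf \Gamma}\in X^{s-r}$, which requires $s-r\rs 0$, i.e. $r\ls s$. This isolates $\bf \Gamma$ as the sole obstruction to the extra derivative in the generic case.

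For a spacetime of general relativity ($1\ls r\ls s+1$) the point is to remove exactly this obstruction using Einstein's equations. Working in wave coordinates makes ${\bf \Gamma}^\al=0$ pointwise, so at top order the dangerous contributions no longer come from a free first-order coefficient but are organized, after commuting covariant derivatives through $\square_{\bf g}$, into the divergence of the Riemann tensor ${\bf C}_{\bet\ga\de}={\bf D}^\al{\bf R}_{\al\bet\ga\de}$. Here I would invoke the contracted Bianchi identity displayed in the introduction, ${\bf C}_{\bet\ga\de}={\bf D}_\ga{\bf Ric}_{\bet\de}-{\bf D}_\de{\bf Ric}_{\ga\bet}$, and substitute the Einstein equations in trace-reversed form ${\bf Ric}={\bf T}-\demi(\mathrm{tr}\,{\bf T}){\bf g}$ to express ${\bf C}$ through a single derivative of the energy-momentum tensor. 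Under the physically reasonable matter hypotheses (that the matter equations $F(\Psi,{\bf g})=0$ propagate $X^s$-regularity of $\Psi$, so that ${\bf T}\in X^s_T$ and $\partial{\bf T}$ sits at the $X^{s-1}_T$ level), the term ${\bf C}$ is one degree more regular than a generic divergence of curvature would be, and this is precisely the gain that lets the order-$(s+1)$ energy identity close. The vacuum case ${\bf T}=0$ gives ${\bf C}=0$ and recovers the Planchon--Rodnianski result.

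The main obstacle I expect is the top-order commutator bookkeeping: one must show that after the necessary integrations by parts in the energy integral, every term is either a harmless product controlled in $X^s_T$ with $s>5/2$, or is proportional to ${\bf C}$ (equivalently to the symmetric combination $\partial_\mu{\bf \Gamma}_\nu+\partial_\nu{\bf \Gamma}_\mu$ that Einstein's equations fix), so that no uncontrolled $\partial^{s}{\bf \Gamma}$ or $\partial^{s+1}{\bf g}$ survives. A secondary but essential point is to pin down the correct matter hypotheses: one needs the coupling to be such that $\Psi$, and hence ${\bf T}$, gains no derivatives relative to ${\bf g}$ --- a condition met by scalar fields, Maxwell and Yang--Mills models, where $\bf T$ is quadratic in $\partial\Psi$ and the evolution preserves $X^s$. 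Once these two points are secured, the a priori estimate holds for $r\ls s+1$ and the standard well-posedness argument concludes.
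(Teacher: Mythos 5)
Your diagnosis in the generic case is correct and agrees with the paper (standard energy estimates give the range $1\leq r\leq s$, the paper's lemma 3.0.1), and your instinct for the Einstein case --- use the contracted Bianchi identity to trade the divergence of the Riemann tensor for derivatives of the Ricci tensor, which the field equations plus the matter admissibility make one degree more regular --- is exactly the mechanism of the paper. But your proposal has a genuine gap at the decisive step: you remove the term ${\bf \Gamma}^\al \partial_\al \phi$ by ``working in wave coordinates'', and this is circular. The hypothesis ${\bf g}\in X^s_T$ holds in the given coordinates; a priori the components of ${\bf g}$ in wave coordinates lie only in $X^{s-1}_T$ (the wave coordinates are, before the theorem is proved, only $X^s$ functions of the old ones), so the gauge change destroys precisely the regularity you need, and in any case the conclusion $\phi\in X^r_T$ is a statement about the norms of the original foliation. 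Indeed, in its application to geometric uniqueness the whole point of the theorem is to show that wave coordinates built over an $X^s$ metric are $X^{s+1}$; you cannot assume this in the proof. Relatedly, your parenthetical claim that ${\bf C}$ is ``equivalently the symmetric combination $\partial_\mu {\bf \Gamma}_\nu + \partial_\nu {\bf \Gamma}_\mu$ that Einstein's equations fix'' is false: the reduced form of the Ricci tensor determines $\partial_{(\mu}{\bf \Gamma}_{\nu)}$ only modulo $\square_{\bf g}{\bf g}_{\mu\nu}$ plus lower order, and $\square_{\bf g}{\bf g}_{\mu\nu}$ carries the same top-order count of derivatives of ${\bf g}$, so no coordinate combination of $\partial^{s}{\bf \Gamma}$ is controlled by the field equations alone.

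The repair is to drop the gauge change altogether, which is what the paper does: commute \emph{covariant} derivatives, not coordinate derivatives, through $\square_{\bf g}$. The identities $\square_{\bf g}{\bf D}_\al \phi = {\bf R}_{\al}^{\hspace{2mm}\bet}{\bf D}_\bet \phi$ and the twice-commuted equation (13) are tensorial and valid in the given coordinates, and ${\bf \Gamma}$ never enters as a free coefficient. In (13) the only terms containing three derivatives of ${\bf g}$ sit in ${\bf A}_{\al \bet \de}={\bf D}_{\al}{\bf R}{\bf ic}_{\bet \de}+{\bf D}_{\bet}{\bf R}{\bf ic}_{\al \de}-{\bf D}_{\de}{\bf R}{\bf ic}_{\al \bet}$ --- this is where the Bianchi identity and your ${\bf C}$ actually enter --- and $s$-admissibility gives ${\bf Ric}\in X^{s-1}_T$, hence ${\bf A}\in L^\infty_T H^{s-2}$. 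Note that your bookkeeping ``${\bf T}\in X^s_T$'' is one derivative too optimistic (for a scalar field ${\bf T}$ is quadratic in ${\bf D}\psi \in X^{s-1}_T$), but the $X^{s-1}$ level for the Ricci tensor is all that is needed. The standard energy lemma applied at level $s-1$ to the equation for ${\bf D}^2\phi$ (the curvature-times-${\bf D}^2\phi$ terms being harmless products $H^{s-2}\cdot H^{s-1}\subset H^{s-2}$ for $s>5/2$), together with $\partial^2\phi = {\bf D}^2\phi + {\bf \Gamma}\cdot \partial \phi$ and ${\bf g}\in X^s_T$, then yields $\phi \in X^{s+1}_T$. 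With the wave-coordinates step excised and replaced by this covariant commutation, your outline coincides with the paper's proof.
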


As a consequence of the previous theorem, Lorentzian metrics of physically reasonable spacetimes have optimal regularity in de Donder gauge in the sense that if a metric lies in $X^s_T$ in a given gauge, it is also lies (locally) in $X^s_T$ in de Donder gauge. A natural question is if one can obtain optimal regularity in de Donder gauge under bounds on geometric quantities only, this is the subject of the second part of this note.

\section{General setting :}

Following \cite{HE}, we suppose that the matter equation with given background metric satisfy the following existence, uniqueness and stability properties :

\begin{itemize}
\item ${\bf T_{\al \bet}} $ is polynomial in $\Psi$ and ${\bf D} \Psi$ and ${\bf g}$
\item for a given ${\bf g}\in X^s_T$ and $[\Psi_0]=(\Psi,\partial_t \Psi)\in  H^s(\Sigma_0)\times  H^{s-1}(\Sigma_0)$, $F(\Psi, {\bf g})=0$ has a unique solution in $X^s_{T'}$, where $0<T'\leq T$ depends continuously on the $X^s \times H^s$ norm of $({\bf g}, [\Psi_0])$. Denote by $\Psi_{{\bf g}, \Psi_0}$ this solution. We suppose moreover that for $S\subset \Sigma_0$ a smooth submanifold of $\Sigma_0$,  the value of $\Psi_{{\bf g}, \Psi_0}$ on $\mathcal{D}^{+}(S)$ the future domain of dependance of $S$ in ${\bf M}$  depends only on the initial data restricted to $S$.

\item We suppose that the following stability conditions are satisfied : for ${\bo g} \in X^s_T$, there exists $C({\bo g}, \Psi_0)>0$ and $D({\bo g, \Psi_0})>0$ s.t. for any ${\bo g'}$ and $\Psi'_0$, $\norm{{\bf {g- g'}}}_{X^s_T}+\norm{ [ \Psi_0 - \Psi'_0  ]} _{H^s} \leq C$ : 
\begin{equation}
\norm{{\bf \Psi}({\bo g} ,\Psi_0)-{\bf \Psi}({\bo {g'}} ,\Psi'_0)}_{X^s_{T'}} \leq D(\norm{{\bo {g- g'}}}_{X^s_T} + \norm{ [ \Psi_0 - \Psi'_0  ]} _{H^s}) 
\end{equation}
\end{itemize}

We call such a matter model $s$-admissible. Note that we allow the matter equation to have its own breakdown mechanism ($T'<T$). These properties arise as a consequence of physical assumptions (see \cite{HE}) on the matter model. 
\begin{thm}
Consider  Einstein equations coupled with an $s$-admissible  matter model for $s>5/2$ then existence and uniqueness hold true for the reduced system in wave coordinates and data $(g,k,[\Psi_0])\in H^s(\Sigma_0)\times  H^{s-1}(\Sigma_0) \times (H^s(\Sigma_0)\times  H^{s-1}(\Sigma_0))$ satisfying the constraint equations. The solution lies in $X^s_T$ and depends continuously on the initial data.
\end{thm}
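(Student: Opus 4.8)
The plan is to treat the reduced Einstein--matter system as a coupled quasilinear wave equation for the metric components together with the matter equation, and to solve it by a Picard-type iteration closed by energy estimates. In wave coordinates the trace-reversed Einstein equations $\bo{Ric}_{\al\bet}=\bo{T}_{\al\bet}-\demi\,\bo{g}_{\al\bet}\,\mathrm{tr}\,\bo{T}$ take the reduced form
\[
\bo{g}^{\m\n}\partial_\m\partial_\n\,\bo{g}_{\al\bet}=Q_{\al\bet}(\bo{g},\partial\bo{g})+\bar{\bo{T}}_{\al\bet}(\Psi),
\]
where $Q$ is quadratic in $\partial\bo{g}$ with coefficients rational in $\bo{g}$ and $\bar{\bo{T}}$ is the trace-reversed energy-momentum tensor. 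First I would set up the iteration: given $\bo{g}^{(n)}\in X^s_{T'}$, solve the matter equation $F(\Psi^{(n+1)},\bo{g}^{(n)})=0$ using the $s$-admissibility hypothesis to produce $\Psi^{(n+1)}=\Psi_{\bo{g}^{(n)},\Psi_0}\in X^s_{T'}$, then solve the linear wave equation $\bo{g}^{(n),\m\n}\partial_\m\partial_\n\,\bo{g}^{(n+1)}_{\al\bet}=Q_{\al\bet}(\bo{g}^{(n)},\partial\bo{g}^{(n)})+\bar{\bo{T}}_{\al\bet}(\Psi^{(n+1)})$ with the prescribed data $(g,k)$ to produce $\bo{g}^{(n+1)}$.

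The core estimate is the energy inequality for the scalar wave operator with rough coefficients, applied componentwise. For $s>5/2$ and $\dim\Sigma_0=3$ one has the Sobolev embedding $H^{s-1}(\Sigma_0)\hookrightarrow L^\infty$, so that $\partial\bo{g}^{(n)}\in L^\infty$ and the coefficients $\bo{g}^{(n),\m\n}$ are Lipschitz; this is exactly what is needed to run the $H^s$ energy estimate and to control the commutators $[\partial^j,\bo{g}^{\m\n}\partial_\m\partial_\n]$ by tame (Moser-type) product inequalities. Combining this with the $s$-admissibility bound on $\norm{\Psi^{(n+1)}}_{X^s}$, I would show the iterates remain in a fixed ball of $X^s_{T''}$ for some $T''\le T'$ depending only on the $H^s\times H^{s-1}$ size of the data. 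Convergence would then follow by showing the iteration is a contraction in the weaker norm $X^{s-1}_{T''}$: the difference $\bo{g}^{(n+1)}-\bo{g}^{(n)}$ solves a linear wave equation whose source and coefficient perturbations are controlled, at one derivative less, by $\norm{\bo{g}^{(n)}-\bo{g}^{(n-1)}}_{X^{s-1}}$ and, through the stability hypothesis, by $\norm{\Psi^{(n)}-\Psi^{(n-1)}}_{X^{s-1}}$. Shrinking $T''$ makes the map a contraction; the limit lies in $X^s$ by the uniform bound together with lower semicontinuity of the norm, and continuous dependence on the data is read off from the same difference estimates.

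The step I expect to be the main obstacle --- and the only one where the coupling with matter is genuinely essential --- is the propagation of the wave gauge, namely showing that a solution of the reduced system actually solves Einstein's equations. Writing $\bo{\Gamma}^\al=\bo{g}^{\de\ga}\bo{\Gamma}^\al_{\de\ga}$, one derives from the contracted Bianchi identity applied to the reduced equations, together with the conservation law $\bo{D}^\al\bo{T}_{\al\bet}=0$ that is part of the physically reasonable matter assumptions, a homogeneous linear wave system for $\bo{\Gamma}^\al$. The constraint equations (3)--(4) guarantee that $\bo{\Gamma}^\al$ and $\partial_t\bo{\Gamma}^\al$ vanish on $\Sigma_0$, so uniqueness for the linear wave equation forces $\bo{\Gamma}^\al\equiv0$. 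The delicate point is to carry out this argument at the available regularity, since $\bo{\Gamma}^\al$ is one derivative rougher than $\bo{g}$, and to verify that $\bo{D}^\al\bo{T}_{\al\bet}=0$ holds for the constructed matter field rather than merely for smooth solutions; this is where the $s$-admissibility of the matter model, and in particular its domain-of-dependence and stability properties, must be invoked to justify the computation.
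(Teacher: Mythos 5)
Your proposal is correct and is essentially the argument the paper intends: the paper's own proof merely displays the reduced system $\square_{\bf g}{\bf g}_{\al \bet}= \mathcal{N}_{\al \bet}({\bf g}, \partial{\bf g}) -2 {\bf T}_{\al \bet}(\Psi)+ {\bf T}(\Psi){\bf g}_{\al \bet}$, $F({\bf g},\Psi)=0$ and cites standard energy methods \cite{HE}, and your Picard iteration alternating the matter solve (via $s$-admissibility) with a linear wave solve, closed by tame $H^s$ energy estimates and contraction in a weaker norm, is exactly the standard argument behind that citation. Note only that your final paragraph on propagation of the wave gauge exceeds the statement, which claims well-posedness of the reduced system alone; that step belongs to the geometric uniqueness theorem later in the paper, not here.
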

\begin{proof}
The reduced system in wave coordinates is of the following form :
\begin{eqnarray}
\square_{\bf g}{\bf g}_{\al \bet}&=& \mathcal{N}_{\al \bet}({\bf g}, \partial{\bf g}) -2 {\bf T}_{\al \bet}(\Psi)+ {\bf T}(\Psi){\bf g}_{\al \bet} \\
F(g,\Psi)&=&0
\end{eqnarray}

where $\mathcal{N}_{\al \bet}$ is quadratic in $\partial{\bf g}$. The existence and uniqueness theorem for this reduced system can be proved by standard energy methods (see \cite{HE} for instance).
\end{proof}

 Remark however that a solution $({\bf g},\Psi) \in X^s_T$ of Einstein equations coupled with a $s$-admissible matter model is such that ${\bf {DRic}}\in L^{\infty}_T H^{s-2}(\Sigma)$. We will see that this implies that the solution of the wave equation with initial data in $H^{s+1}(\Sigma_0) \times H^s(\Sigma_0)$ are in $X^{s+1}_T$ and thus :

\begin{thm}
Consider Einstein equations coupled with a $s$-admissible matter model for $s>5/2$. Consider an initial data set satisfying the constraint equations $(g,k,[\Psi_0])\in H^s(\Sigma_0)\times  H^{s-1}(\Sigma_0) \times (H^s(\Sigma_0)\times  H^{s-1}(\Sigma_0))$. Consider two Cauchy developments $({\bf M}, {\bf g}, {\bf \Psi})$ and $({\bf M'}, {\bf g'}, {\bf \Psi}')$ of $(\Sigma_0,g,k,[\Psi_0])$ Then there exists isometric neighborhoods $\mathcal{U}$ and $\mathcal{U}'$  of the embedding of $\Sigma_0$ in ${\bf M}$ and ${\bf M}'$.
\end{thm}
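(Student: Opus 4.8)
The plan is to establish geometric uniqueness by constructing a common wave-coordinate gauge on both developments and invoking the uniqueness statement for the reduced system (Theorem 2.0.1 above). First I would recall the classical strategy: on each development one solves the geometric wave equation $\square_{\bf g} x^\al = 0$ with Cauchy data on $\Sigma_0$ prescribing that $(x^0, x^1, x^2, x^3)$ form coordinates adapted to $\Sigma_0$ (so that $x^0 = t$ restricts to a time function and $(x^1,x^2,x^3)$ restrict to fixed coordinates on $\Sigma_0$, with $\partial_t x^\al$ fixed by the requirement that the resulting metric coefficients match the initial data $(g, k)$ and the constraint equations). The crucial point, and the reason this note is needed, is the \emph{regularity} of these wave coordinates: the naive estimate only gives $x^\al \in X^s_T$, which would force the metric in the new gauge to lie in $X^{s-1}_T$ and hence require extra initial regularity. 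Here I would invoke the main theorem (Theorem 1.0.1), which asserts that for a physically reasonable spacetime the geometric linear wave equation {\bf (L)} is well-posed for $1 \le r \le s+1$; applied to the coordinate functions $x^\al$ with their $H^{s+1} \times H^s$ initial data, this yields $x^\al \in X^{s+1}_T$, so that the pushed-forward metric ${\bf g}$ in the new coordinates remains in $X^s_T$ with no loss.

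Next I would transport both solutions into their respective wave gauges. Since $\mathbf{DRic} \in L^\infty_T H^{s-2}(\Sigma)$ for an $s$-admissible matter model (the observation preceding the theorem), the source term in {\bf (L)} has the requisite regularity to apply the improved well-posedness. In each development the transformed data $(\tilde{\bf g}, \tilde\Psi)$ on $\Sigma_0$ are determined purely by $(g, k, [\Psi_0])$ together with the constraint equations: the gauge conditions $\square_{\bf g} x^\al = 0$ fix the remaining (gauge) components of the metric and its normal derivative on $\Sigma_0$ so that the full second fundamental form and lapse/shift are recovered from the abstract initial data. I would verify that the two developments therefore produce \emph{identical} reduced Cauchy data in wave coordinates. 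This step uses the matter-stability and domain-of-dependence hypotheses of $s$-admissibility, which guarantee that $\Psi$ on $\mathcal{D}^+(S)$ depends only on the restricted data, so the matter fields likewise agree after the gauge transformation.

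Having identical data, I would apply Theorem 2.0.1 (existence and uniqueness for the reduced system in wave coordinates in $X^s_T$) to conclude that the two transformed solutions coincide on a common time interval $]0,T''[$ for some $T'' > 0$ determined by the continuous dependence of the lifespan on the $X^s \times H^s$ norm. This produces a diffeomorphism between neighborhoods of the two embedded copies of $\Sigma_0$ intertwining the metrics and matter fields, which is precisely the asserted isometry: the wave coordinates on ${\bf M}$ and those on ${\bf M'}$ provide charts identifying $\mathcal{U} \subset {\bf M}$ with $\mathcal{U}' \subset {\bf M'}$, and in these charts ${\bf g}$ and ${\bf g'}$ are the \emph{same} function of the coordinates because they solve the same reduced Cauchy problem.

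The main obstacle, and the heart of the whole note, is the regularity gain in the construction of the wave coordinates: without the improved well-posedness of {\bf (L)} at level $r = s+1$, the coordinate functions sit only in $X^s_T$ and the comparison would require data in $H^{s+1}$ rather than $H^s$. I would therefore be careful to confirm that the coordinate functions $x^\al$ genuinely carry $H^{s+1}(\Sigma_0) \times H^s(\Sigma_0)$ Cauchy data — this is where the identity $\mathbf{D}^\al \mathbf{R}_{\al\bet\ga\de} = \mathbf{D}_\ga \mathbf{Ric}_{\bet\de} - \mathbf{D}_\de \mathbf{Ric}_{\ga\bet}$ and the bound $\mathbf{DRic} \in L^\infty_T H^{s-2}$ feed into the argument, compensating for the non-vanishing divergence of the Riemann tensor that obstructs the direct Planchon--Rodnianski proof. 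Secondary care is needed in matching the matter data across the gauge change and in passing from the common-interval coincidence to genuinely isometric \emph{open} neighborhoods, but these are routine once the regularity of the gauge is secured.
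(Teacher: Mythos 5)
Your proposal is correct and follows essentially the same route as the paper: the paper likewise reduces this theorem to the improved well-posedness of {\bf (L)} at $r=s+1$ (its Theorem 3.0.1, proved via the iterated wave equations and the Bianchi combination ${\bo A}_{\al \bet \de}= {\bo D}_{\al}{\bo R}_{\bet \de} + {\bo D}_{\bet}{\bo R}_{\al \de} - {\bo D}_{\de}{\bo R}_{\al \bet}$ together with ${\bf DRic}\in L^{\infty}_T H^{s-2}$), so that the wave coordinates lie in $X^{s+1}_T$, the metric suffers no loss in de Donder gauge, and uniqueness for the reduced system (Theorem 2.0.1) yields the isometric neighborhoods. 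You have correctly identified both the gauge-construction step and the key regularity gain that the paper's argument hinges on.
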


The fact that the Ricci tensor is more regular then the Riemann curvature tensor has no reason to be true for a generic Lorentzian metric. However considering equation (8), we remark that for a given ${\bf T}_{\al \bet}\in X^s_T$ we can only hope to prove by energy methods that ${\bf g}$ to be in $X^{s+1}_T$ and not in $X^{s+2}_T$ as one could expect. That's why standard admissibility conditions required in the literature to prove existence and uniqueness for the reduced system in wave coordinates imply in fact that Ricci tensor for these matter models are more regular that their curvature tensor of one degree of regularity in spatial Sobolev norms. We will take advantage of this gap of regularity between the conformal part of the curvature tensor and the Ricci part.
 
   Examples of admissible matter models include models for which the matter equation is a quasi-linear second order hyperbolic system or first order symmetric hyperbolic system provided that the null cone for the matter equation lye within or coincide with the light cone (see \cite{HE}). 
   
   Let us recall some examples of models entering these categories :

- {\bf Einstein scalar-field :}

In this case $\psi$ is a scalar-field :
\begin{eqnarray*}
{\bf R}_{\al \bet}&=& {\bf D}_{\al } \psi {\bf D}_{\bet} \psi \\
\square_{\bf g} \psi&=& 0
\end{eqnarray*} 

This model is $s$-admissible for $s>5/2$ (see \cite{CB} chap. VI).

- {\bf Einstein Maxwell :}
\begin{eqnarray*}
{\bf R}_{\al \bet}-\demi {\bf R} {\bf g}_{\al \bet}&=& {\bf F}_{\al}^{\hspace{2mm} \nu }{\bf F}_{\bet \nu } - \frac{1}{4}{\bf g}_{\al \bet} {\bf F}^{\mu \nu } {\bf F}_{\mu \nu }
\end{eqnarray*}
and ${\bf F}$ satisfies Maxwell equations. This model is also $s$-admissible for $s>5/2$ (see \cite{CB} chap. VI).

- {\bf Einstein Euler :}
\begin{eqnarray*}
{\bf R}_{\al \bet}-\demi {\bf R} {\bf g}_{\al \bet}&=& (p+\rho) {\bf u}_{\al}{\bf u}_{\bet} + p {\bf g}_{\al \bet}
\end{eqnarray*}

where ${\bf u}_{\al}{\bf u}^{\al}=-1$ denotes the flow vector, $\rho$ the energy density and $p$ the pressure density. In addition, an equation of state has to be given. Under suitable assumptions on this equation, Einstein-Euler equations reduce to a symmetric hyperbolic system. Energy methods imply that the model is $s$-admissible (see \cite{CB} chap. IX) provided that the speed of sound is smaller than the speed of light and $s>7/2$.

   An obvious change in the functional setting enables to include Boltzmann and Vlasov models (see \cite{CB} chap. X). The setting of this note thus includes classical matter theories (at least for large $s$) and enables to fill the gap between the regularity needed for the reduced system in wave gauge and the local geometric uniqueness  result. Note that this short paper does not cover the more subtle question of low regularity solutions for Einstein equations ($2<s<5/2$) which will be covered by a work in preparation (\cite{PPR}). In particular  lemma (3.0.1)  is true by standard methods only for $s>5/2$.
  
\section{Geometric uniqueness :}
According to the previous arguments, theorem (2.0.3) reduces to :
\begin{thm}
 Let $({\bf g}, \Psi) \in X^s _T\times X^s_T$ be a solution of the system of equations (1),(2). Suppose that the matter model is $s$-admissible for an $s>5/2$. Let us consider a solution of  $\square_{\bf g} \phi =0$ with initial data $(\phi_0, \phi_1) \in H^{r}(\Sigma_0)\times H^{r-1}(\Sigma_0)$ for $1 \leq r \leq s+1$, then $\phi \in X^{r}_T$.
\end{thm}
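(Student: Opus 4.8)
The plan is to treat the top case $r=s+1$ geometrically; the range $1\le r\le s$ is the standard variable-coefficient energy estimate for $\square_{\bf g}$ already contained in lemma (3.0.1) (valid since ${\bf g}\in X^s_T$ with $s>5/2$ makes $\partial{\bf g}$ bounded), and the intermediate fractional values follow by interpolation between the $X^s_T$ and $X^{s+1}_T$ bounds. First I would record the two regularity inputs supplied by $s$-admissibility: writing Einstein's equation (1) as ${\bf Ric}_{\al\bet}={\bf T}_{\al\bet}-\demi({\bf g}^{\m\n}{\bf T}_{\m\n}){\bf g}_{\al\bet}$ and using that ${\bf T}$ is polynomial in $(\Psi,{\bf D}\Psi,{\bf g})$ with $\Psi,{\bf g}\in X^s_T$, one gets ${\bf Ric}\in X^{s-1}_T$ and hence ${\bf D}{\bf Ric}\in X^{s-2}_T$. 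The point is that the Ricci tensor is one derivative more regular than the full curvature tensor ${\bf R}_{\al\bet\ga\de}\in X^{s-2}_T$; this gap is the entire source of the extra derivative.

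The structural heart of the argument is the commutation identity for a scalar solving $\square_{\bf g}\phi=0$, namely $[\square_{\bf g},{\bf D}_\al]\phi={\bf Ric}_{\al}{}^{\bet}\,{\bf D}_\bet\phi$, which exhibits ${\bf D}\phi$ as a solution of a wave equation \emph{forced by the Ricci tensor rather than by raw second derivatives of the metric}. I would iterate this identity: commuting $k$ covariant derivatives through the equation yields $\square_{\bf g}({\bf D}^{(k)}\phi)=[\square_{\bf g},{\bf D}^{(k)}]\phi$, and by the usual Leibniz and contraction bookkeeping the right-hand side is a sum of terms of the schematic form ${\bf D}^{(a)}(\text{curv})\cdot{\bf D}^{(k-a)}\phi$ with $0\le a\le k-1$ and $\text{curv}\in\{{\bf Ric},{\bf R}_{\al\bet\ga\de}\}$, the Ricci terms arising from the contraction ${\bf g}^{\m\n}{\bf R}_{\m\cdot\n\cdot}={\bf Ric}$.

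The decisive regularity count is then at $k=s$. For the Ricci contributions, ${\bf D}^{(a)}{\bf Ric}\in X^{s-1-a}_T$ and ${\bf D}^{(s-a)}\phi\in X^{a+1}_T$ (under the bootstrap $\phi\in X^{s+1}_T$), whose product lies in $L^2(\Sigma_t)$ for every $0\le a\le s-1$ by Sobolev multiplication ($s>5/2$); likewise every genuine-curvature term ${\bf D}^{(a)}{\bf R}_{\al\bet\ga\de}\cdot{\bf D}^{(s-a)}\phi$ with $a\le s-2$ is in $L^2$ since ${\bf D}^{(a)}{\bf R}_{\al\bet\ga\de}\in X^{s-2-a}_T$ has nonnegative index. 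The single critical term is $a=s-1$, i.e. ${\bf D}^{(s-1)}{\bf R}_{\al\bet\ga\de}\cdot{\bf D}\phi$, for which ${\bf D}^{(s-1)}{\bf R}_{\al\bet\ga\de}\in X^{-1}_T$ is too rough. Here the geometry saves the estimate: in this top term the derivative coming from $\square_{\bf g}$ is contracted into a curvature index, so up to lower-order curvature$\times$curvature remainders it equals ${\bf D}^{(s-2)}({\bf D}^{\al}{\bf R}_{\al\bet\ga\de})\cdot{\bf D}\phi$, and the second Bianchi identity (6), ${\bf D}^{\al}{\bf R}_{\al\bet\ga\de}={\bf D}_\ga{\bf Ric}_{\bet\de}-{\bf D}_\de{\bf Ric}_{\ga\bet}$, turns the divergence of the curvature into ${\bf D}{\bf Ric}\in X^{s-2}_T$; hence ${\bf D}^{(s-2)}({\bf D}{\bf Ric})\in L^2(\Sigma_t)$ and the term is controlled. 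This is exactly the step that uses $s$-admissibility: for a generic metric ${\bf D}^{(s-1)}{\bf R}_{\al\bet\ga\de}$ would only lie in $X^{-1}_T$ and one would lose the derivative, which is why the generic range stops at $r=s$.

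With the forcing $[\square_{\bf g},{\bf D}^{(\le s)}]\phi$ bounded in $L^1_{[0,T]}L^2(\Sigma_t)$, I would close by the standard energy inequality for $\square_{\bf g}$ applied to the collection $\{{\bf D}^{(k)}\phi\}_{k\le s}$ together with a Gr\"onwall argument on $]0,T[$; the initial data $(\phi_0,\phi_1)\in H^{s+1}\times H^s$ control $\{{\bf D}^{(k)}\phi\}_{k\le s}$ and its first time derivative in $L^2(\Sigma_0)$ after using the equation to trade $\partial_t^2\phi$ for spatial derivatives and lower-order terms. Finally the covariant norms $\|{\bf D}^{(k)}\phi\|_{L^2(\Sigma_t)}$ are equivalent to the coordinate $X^{k}_T$ norms up to Christoffel corrections controlled by ${\bf g}\in X^s_T$ (Moser estimates), yielding $\phi\in X^{s+1}_T$. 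I expect the main obstacle to be precisely the commutator bookkeeping of the previous paragraph: one must verify that the only critical contribution in $[\square_{\bf g},{\bf D}^{(s)}]\phi$ is the single divergence-of-curvature term, so that the Bianchi identity (6) upgrades it to ${\bf D}{\bf Ric}$ and the whole scheme gains exactly one derivative; all remaining terms are routine products and standard hyperbolic energy estimates.
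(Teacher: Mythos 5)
Your core mechanism is the same as the paper's: commute covariant derivatives through $\square_{\bf g}$, observe that the only dangerous contributions are first derivatives of curvature, and use the contracted Bianchi identity (6) together with $s$-admissibility to replace the divergence of the full Riemann tensor by ${\bf D}{\bf Ric}\in L^{\infty}H^{s-2}(\Sigma)$, which is precisely the regularity gap between the Weyl and Ricci parts that the paper exploits. The implementations differ, however. The paper stops after \emph{two} commutations: it writes the explicit system (10), (11), (12)--(13) for $(\phi,{\bf D}\phi,{\bf D}^2\phi)$, notes that in (13) all terms with three derivatives of ${\bf g}$ are grouped into the single tensor ${\bf A}_{\al\bet\de}={\bf D}_\al{\bf R}_{\bet\de}+{\bf D}_\bet{\bf R}_{\al\de}-{\bf D}_\de{\bf R}_{\al\bet}$ (the Bianchi reduction of your critical term is already built into passing from (12) to (13)), and then applies the energy lemma (3.0.1) \emph{at the fractional level} $s'=s-1$ to the equation for ${\bf D}^2\phi$, whose source ${\bf A}\cdot{\bf D}\phi+{\bf R}\cdot{\bf D}^2\phi$ lies in $L^1H^{s-2}$ by Sobolev multiplication ($s>5/2$). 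You instead iterate the commutation up to order $k=s$ and close with a plain $L^2$ energy inequality plus Gr\"onwall. Both close the same way in spirit, but the paper's version requires only the one explicit second-commutator identity, whereas yours requires verifying the full $k$-fold commutator expansion you flag as the main obstacle; the paper simply never generates those terms.

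There is one substantive defect in your route as stated: the paper defines the $X^s_T$ norms for \emph{real} $s$, and the theorem is asserted for real $s>5/2$ and real $r\in[1,s+1]$. Your top-case argument ``at $k=s$'', with terms ${\bf D}^{(a)}(\mathrm{curv})\cdot{\bf D}^{(s-a)}\phi$ for integer $0\le a\le s-1$, only makes sense when $s\in\mathbb{N}$, and your interpolation remark cannot repair this because the endpoint $X^{s+1}_T$ is exactly what is being proved. The fix is the paper's own device, and it is available from ingredients you already have: commute only twice, and invoke lemma (3.0.1) with the fractional index $s'=s-1$ (admissible since $1\le s-1\le s$ for $s>5/2$) applied to equation (13), with the a priori step made rigorous by the bootstrap/continuity argument you sketch. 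With that substitution your proof coincides with the paper's; the remaining elements (the range $1\le r\le s$ from the energy lemma, data reduction trading $\partial_t^2\phi$ for spatial derivatives, equivalence of covariant and coordinate norms) are consistent with what the paper uses or leaves implicit.
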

\begin{proof}
Note that by standard energy techniques, we have :
\begin{lem}
Consider a solution of the geometric wave equation $\square_{\bo g} \psi = F$, with initial data $(\psi_0, \partial_t \psi_0) \in H^{s'}(\Sigma_0)\times H^{s'-1}(\Sigma_0)$,  ${\bo g} \in X^s_T$ and $F\in L^1H^{s'-1}(\Sigma)$ with $s>5/2$ and $1 \leq s'\leq s$, then $\psi \in X^{s'}_T$ and :
$$
\| \psi \|_{X^{s'}_T} \lesssim \|(\psi_0, \partial_t \psi_0) \|_{ H^{s'}(\Sigma_0)\times H^{s'-1}(\Sigma_0)} + \|F\|_{ L^1H^{s'-1}}
$$
\end{lem}

 Now note that we have the following iterated wave equations for $\phi$ :
\begin{eqnarray}
\square_{\bo g}\phi&=&0 \\
\square_{\bo g}{\bo D}_\al \phi &= & {\bo {R}} _{\al}^{\hspace{2mm}\bet} {\bo D}_\bet \phi \\
\square_{\bo g}{\bo D}^2_{\al \bet}\phi &=& {\bo D}_{\al} ( {\bo {R}} _{\bet}^{\hspace{2mm}\mu} {\bo D}_\mu \phi) + {\bo {R}}^{\mu}_{\hspace{2mm}\al} {\bo D}^2_{\mu \bet} \phi + {\bo {R}}_{\betÊ\hspace{3mm} \al}^{\hspace{2mm}\lambda \mu}{\bo D}^2_{\lambda \mu} \phi \\
\nonumber && + {\bo C}_{\al \bet \lambda}  {\bo D}^{\lambda} \phi + {\bo R}_{\mu \al \bet \lambda} {\bo D}^{\lambda \mu} \phi \\
&=& ({\bo D}_{\al}  {\bo {R}} _{\bet \mu} + {\bo D}_{\bet}  {\bo {R}}_{\al \mu}  - {\bo D}_{\mu}  {\bo {R}} _{\al \bet}  ) \cdot {\bo D}^\mu \phi \\ \nonumber
&&+ {\bo R}_\al ^{\hspace{2mm}\gamma} {\bo D}^2 _{\gamma \beta} \phi + {\bo R}_\bet ^{\hspace{2mm}\gamma} {\bo D}^2 _{\gamma \al } \phi + 2 {\bo R}^{\rho \hspace{4mm} \sigma }_{\hspace{2mm} \al \bet} {\bo D}^2 _{\rho \sigma } \phi
\end{eqnarray}

Note that in $(13)$, the only terms involving three derivatives of ${\bo g}$ lie in ${\bo A}$ where ${\bo A}_{\al \bet \delta}= {\bo D_{\al}}{\bo {R}}_{\bet \delta} + {\bo D_{\bet}}{\bo {R}}_{\al \delta} -  {\bo D_{\delta}}{\bo {R}}_{\al \bet}$. Now as ${\bf g}\in X^s_T$, ${\bf R}\in L^{\infty}H^{s-2}(\Sigma)$  and as the matter-model is $s$-admissible, ${\bf A}\in L^{\infty}H^{s-2}(\Sigma)$. Applying lemma (2.0.1) to (10), (11), (12), we obtain that $\phi \in X^{s+1}$. 
\end{proof}

\section{Regularity under curvature bounds :}

Note that this result can be rephrased in terms of local regularity under geometric bounds. Consider an admissible time-orientable spacetime ${\bf (M,g,\Psi)}$, there exists a smooth timelike unit future-directed vectorfield ${\bf T}$. Denote by ${\bf h}$ the Riemannian metric :
\begin{equation}
{\bf h}_{\al \bet} = {\bf g}_{\al \bet} + 2 {\bf T}_\al {\bf T}_\bet
\end{equation}

and by $^{\bf (T)}{\pi}$ the deformation tensor of ${\bf T}$ :
\begin{equation}
^{\bf (T)}{\pi}_{\al \bet}= {\bf D}_{\al} {\bf T}_{\bet} +{\bf D}_{\bet} {\bf T}_{\al} 
\end{equation}

For a point $p \in {\bf M}$, we denote by $B(p,a)$ the geodesic ball about $p$ of ${\bf h}$-radius $a$ and $inj(p)$ the radius of injectivity of $p$ in ${\bf (M,h)}$. Consider a ball $B(p,r)$ foliated by a time function $t$ s.t. ${ \bf T }$ is normal to the leaves $\{t=c\}$, we consider the rescaled $X^s(B(p,r))$ norm. For $ k \in \mathbb{N}$, define :

\begin{equation*}
\norm{f}^2_{X^k_T} = \sup_{t \in ]0,T[} \sum_{0\leq |j|\leq k} r^{2 |j|-3}  \int_{\Sigma_t} \abs{\partial^j f}^2 d\mu_{g}
\end{equation*}

These norms can be extended to non negative real indices $s$. As a consequence of the admissibility condition, there exists $(n,m) \in \mathbb{N}^2$ s.t. :
\begin{equation} 
\norm{{\bf Ric(\Psi)}}_{X^{s-1}(B(p,r))} \lesssim r^n\norm{{\bf \Psi}}^m_{X^s(B(p,r))}
\end{equation}

We then have the following theorem :
\begin{thm}
Consider an $s$-admissible spacetime ${\bf (M,g,\Psi,T)}$ for $s>5/2$, $ p \in {\bf M}$, $r>0$, $\al>0$ s.t. :
\begin{eqnarray*}
inj(p)&>&r \\
r^2\norm{ {\bf R}}_{X^{s-2}(B(p,r))} + r^{n+2}\norm{{\bf \Psi}}^m_{X^{s}(B(p,r)} + r \norm{^{\bf (T)}{\pi}}_{X^{s-1}(B(p,r))}&\leq& \al 
\end{eqnarray*}
then there exists $r'=rc(\al)>0$ s.t. $B(p,r')$ can be covered by wave coordinates $(t_w, x^1,x^2,x^3)$ s.t. in these coordinates :
\begin{equation*}
\norm{{\bf g}}_{X_w^s(B(p,r'))} \leq C(\al)
\end{equation*}
where $X_w^s(B(p,r'))$ denotes the $X^s(B(p,r'))$ space constructed from the wave coordinates 
$(t_w, x^1,x^2,x^3)$.
\end{thm}
\begin{proof}
Denote by $S_p$ the $t$ slice s.t. $p\in S_p$ and $g_p$ the induced Riemannian  metric on $\Sigma_p:= S_p \cap B(p,r)$.  Its curvature satisfies :
\begin{eqnarray}
r^2 \norm{R}_{H^{s-2}(\Sigma_p)}& \lesssim& r^2 \norm{{\bf R}}_{X^{s-2}} + r^2 \norm{^{\bf (T)}{\pi}^2}_{X^{s-2}}   + r^2 \norm{{\bf D} ^{\bf (T)}{\pi}}_{X^{s-2}} \\
& \lesssim & D(\al)
\end{eqnarray}
There exists $0<c(\al)<1$ s.t. one can construct  $H^{s+1}$ coordinates $(y^1,y^2,y^3)$ on $S_p\cap B(p,cr)$ s.t. in these coordinates :
\begin{equation}
\norm{g_{p..}}_{H^s(\Sigma_p)} +\norm{k_{p..}}_{H^{s-1}(\Sigma_p)} \lesssim D(\al)
\end{equation}
 Using energy estimates applied to (10), (11), (12), on can construct wave coordinates $x^\mu=(t_w, x^1,x^2,x^3)$ in a neighborhood of $S_p \cap B(p,cr)$ with initial data $x^i=y^i$ and $t_w=0$ on $\Sigma_p$, $\partial_t t_w=1$ and $\partial_t x^i=0$ on $\Sigma_p$. The coordinates cover a ball $B(p,r')$ with $r'\geq c'(\al)r$ and are such that :
\begin{equation}
\norm{x^\mu}_{X^{s+1}(B(p,c'r))} \lesssim D(\al)
\end{equation}

in the new coordinates :
\begin{equation}
r^2\norm{ {\bf R}}_{X_w^{s-2}(B(pc',r))} + r^2\norm{ {\bf Ric(\Psi)}}_{X_w^{s-1}(B(p,c'r))} \lesssim D(\al)
\end{equation}

now as the new coordinates are wave coordinates :
\begin{equation}
\square_{\bf g}{\bf g}_{\al \bet}= \mathcal{N}_{\al \bet}({\bf g}, \partial{\bf g}) -2 {\bf Ric}_{\al \bet}(\Psi)
\end{equation}
and $({\bf g},\partial_t{\bf g} ) \in H^{s}\times H^{s-1}$ on $\Sigma_p$ with :
\begin{equation}
\norm{({\bf g},\partial_t{\bf g} )}_{H^{s}(\Sigma_p)\times H^{s-1}(\Sigma_p)} \lesssim D(\al)
\end{equation}

energy estimates then give :

\begin{equation}
\norm{{\bf g}}_{X_w^s(B(p,cr))} \lesssim D(\al)
\end{equation}

\end{proof}

In the case of vacuum spacetimes, the metric is known to have optimal regularity under pointwise bound on the full curvature tensor in CMC-spatially harmonic coordinates (see \cite{Lefloch1}, \cite{Lefloch2}), for earlier work in time normal spatially harmonic coordinates see \cite{Anderson}. The previous theorem in de Donder gauge is complementary to the regularity result in CMCSH gauge. The main difference between these two approaches of regularity of Lorentzian metrics lye of course in the techniques used to obtain regularity of the components of the metrics : if regularity in CMCSH gauge comes from elliptic estimates on the $t$ slices, the approach presented here rely on hyperbolic estimates. 

As a consequence, there exists a natural notion of wave radius dual to the notion of harmonic radius central in the theory of convergence of Riemannian manifolds and the corresponding notion of norm similar to the one introduced by Petersen (see \cite{Petersen}). As an example of this parallel, let us state a precompactness result.

\section{A precompactness result :}

Consider an admissible spacetime ${\bf (M,g,\Psi,T)}$ for a given admissible matter model denoted by $\mathcal{F}$. If $inj(p)$ is uniformly bounded by below on ${\bf M}$, denote by $\norm{{\bf (M,g,\Psi,T)}}_{r,s}$ the smallest $\al$ (if finite) s.t. the conditions of theorem 4.0.5 holds true for every $p \in {\bf M}$. Denote by $\mathcal{X}(r,s,\al; \mathcal{F})$ the class of complete admissible spacetimes satisfying the Einstein equations coupled with the matter model $\mathcal{F}$ and such that $\norm{{\bf (M,g,\Psi,T)}}_{r,s} \leq \al$.

The following precompactness result can be proved following step by step the corresponding result of  Riemannian geometry (see \cite{Petersen}) :
\begin{thm}
The class $\mathcal{X}(r,s,\al; \mathcal{F})$ is precompact for the pointed $C^{1,\beta}$ topology for $1+ \beta < s-3/2$.
\end{thm}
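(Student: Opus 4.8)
The proof follows the convergence theory of Riemannian geometry (\cite{Petersen}) almost verbatim, with harmonic coordinates replaced by the wave coordinates of Theorem 4.0.5. I would establish sequential precompactness: given a sequence $({\bf M}_i,{\bf g}_i,{\bf \Psi}_i,{\bf T}_i)$ in $\mathcal{X}(r,s,\al;\mathcal{F})$ with basepoints $p_i\in{\bf M}_i$, I extract a subsequence converging in the pointed $C^{1,\beta}$ topology to a limit $({\bf M}_\infty,{\bf g}_\infty,p_\infty)$. Everything reduces to producing, uniformly in $i$, a controlled atlas of charts in which the metric coefficients are bounded in $C^{1,\beta}$ together with uniformly controlled transition maps; an Arzel\`a--Ascoli argument and a diagonalization over an exhaustion of the limit then furnish the subsequential limit.

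Since $\norm{({\bf M}_i,{\bf g}_i,{\bf \Psi}_i,{\bf T}_i)}_{r,s}\leq\al$, the hypotheses of Theorem 4.0.5 hold at every point, so there is a radius $r'=r\,c(\al)>0$ independent of $i$ and of the centre, and around each point a system of wave coordinates in which $\norm{{\bf g}_i}_{X_w^s(B(\cdot,r'))}\leq C(\al)$. The Sobolev embedding on the three-dimensional slices, $H^s\hookrightarrow C^{1,\beta}$, which is valid exactly for $1+\beta<s-3/2$, converts this into a uniform $C^{1,\beta}$ bound on the metric coefficients in each coordinate ball; the time regularity needed to promote this to a genuine spacetime bound is furnished by the $\partial_t$ derivatives already contained in the $X^s$ norm. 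This is precisely where the exponent constraint $1+\beta<s-3/2$ originates, and it is the sharp Sobolev threshold.

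I would then build the cover. As $inj(p)>r$ holds uniformly and each wave chart has the fixed size $r'$, for each $i$ one selects a maximal $(r'/2)$-separated net containing $p_i$, which yields a uniformly locally finite cover of any prescribed ${\bf h}$-geodesic ball $B(p_i,R)$ by wave charts. The overlap maps between two such charts are bounded in $C^{2,\beta}$ uniformly in $i$, because every wave coordinate function itself lies in $X^{s+1}$ by Theorem 4.0.5, and $X^{s+1}\hookrightarrow C^{2,\beta}$ in the same range $\beta<s-5/2$; this is exactly the regularity of the transition maps required so that pulling back a $C^{1,\beta}$ metric preserves the class. A diagonal extraction, first over the finitely many charts of a fixed $B(p_i,R)$ and then over an exhaustion $R\to\infty$, produces by Arzel\`a--Ascoli a subsequence along which the metric coefficients and all transition maps converge in $C^{1,\beta'}$ for every $\beta'<\beta$. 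Patching the limiting charts along the limiting transition maps yields a manifold ${\bf M}_\infty$, a metric ${\bf g}_\infty\in C^{1,\beta}$, a basepoint $p_\infty$, and diffeomorphisms realizing the pointed $C^{1,\beta}$ convergence; the fields ${\bf \Psi}_i$ and ${\bf T}_i$, controlled in $X^s$ by the admissibility condition and the hypotheses, pass to the limit by the same diagonal argument.

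The principal obstacle is the gluing in the Lorentzian setting: one must verify that the net, the overlaps, and above all the transition functions between distinct wave charts --- which depend on a choice of spatial slice and are governed by the hyperbolic equation $\square_{\bf g}x^\mu=0$ rather than by an elliptic one --- are controlled uniformly in $i$, so that the patched object is an honest Lorentzian spacetime rather than a family of mutually incompatible local limits. A secondary technical point is the uniform control of the time regularity of the coefficients in the range $5/2<s\leq 7/2$, where the spacetime $C^{1,\beta}$ bound does not follow from the spatial embedding alone and must be extracted from the structure of the $X^s$ norm. Once these uniformities are in place, the rest is the line-by-line transcription of the Riemannian precompactness argument of \cite{Petersen}.
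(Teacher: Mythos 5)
Your proposal is correct and takes essentially the same approach as the paper, which in fact gives no written proof at all: it merely asserts that the theorem ``can be proved following step by step the corresponding result of Riemannian geometry'' in \cite{Petersen}, and your outline --- uniform wave charts of radius $rc(\al)$ from Theorem 4.0.5, the embedding $H^s\hookrightarrow C^{1,\beta}$ on three-dimensional slices precisely for $1+\beta<s-3/2$, transition maps controlled in $C^{2,\beta}$ via the $X^{s+1}$ regularity of the wave coordinate functions, and an Arzel\`a--Ascoli plus diagonal extraction over an exhaustion --- is exactly the intended transcription of Petersen's argument with harmonic charts replaced by wave charts. You moreover identify the two genuinely Lorentzian points the paper silently elides (the time-regularity of the coefficients in the range $5/2<s\leq 7/2$, and the uniform control of overlaps governed by the hyperbolic equation $\square_{\bf g}x^\mu=0$ rather than an elliptic one), which is precisely where a complete write-up would have to add substance.
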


It would be of interest to study to what extent the parallel between Cheeger-Gromov theory and a theory of convergence for admissible spacetimes could be developed and if such results could have any application to study stability problems in general relativity. The reader will note that Cheeger-Gromov theory has been applied by M.T.Anderson to study long-time evolution of the CMC problem (see \cite{Anderson2}), the approach here would be however different, instead of studying convergence of the Riemannian manifolds $\{ \tau=c \}$ as $\tau \rightarrow 0$ on a given expanding CMC spacetime, it would allow to study convergence of a sequence of spacetimes.

\newpage

\bibliographystyle{plain}

\begin{thebibliography}{10}

\bibitem{Anderson2} \author{M.T. Anderson,}
On long-time evolution in general relativity and geometrization of 3-manifolds,
Commun. Math. Phys. 222 (2001), 533--567.


\bibitem{Anderson} \author{M.T. Anderson,}
Regularity for Lorentz metrics under curvature bounds,
Jour. Math. Phys. 44 (2003), 2994--3012.

\bibitem{CB} \author{Y. Choquet-Bruhat,}
General Relativity and the Einstein Equations,
Oxford Mathematical Monographs.

\bibitem{Lefloch1} \author{B.-L. Chen and P.G. LeFloch,}
Injectivity radius estimates for Lorentzian manifolds,
Commun. Math. Phys. 278 (2008), 679--713.

\bibitem{Lefloch2} \author{B.-L. Chen and P.G. LeFloch,}
Local foliations and optimal regularity of Einstein spacetimes, Jour. Geom. Phy., 7 (2009), 913-941. 

\bibitem{DeTurckKazdan} \author{D.M. DeTurck and J.L. Kazdan,}
Some regularity theorems in Riemannian geometry.
Ann. Sci. \'Ecole Norm. Sup. 14 (1981), 249--260.


\bibitem{HE} \author{S. Hawking and G.F. Ellis,}
{\sl The large scale structure of space-time,}
Cambridge Univ. Press, 1973. 


\bibitem{Petersen} \author{P. Petersen,}
Convergence theorems in Riemannian geometry,
in ``Comparison Geometry'' (Berkeley, CA, 1992--93), MSRI Publ. 30,
Cambridge Univ. Press, 1997, pp.~167--202.

\bibitem{PR} \author{F.Planchon, I.Rodnianski,}
Uniqueness in General Relativity, unpublished.

\bibitem{PPR} \author{D.Parlongue, F.Planchon, I.Rodnianski,} work in preparation

\end{thebibliography}

\end{document}